\newcounter{define}
\newcounter{propose}
\newtheorem{proposition}[propose]{Proposition}
\newtheorem{definition}[define]{Definition}
\newcommand{\tensor}[1]{{\bm {\mathcal{#1}}}}
\newcommand{\aggtensor}[1]{{\check{\bm {\mathcal{#1}}}}}
\begin{document}
	\setlength{\abovedisplayskip}{3pt}
	\setlength{\belowdisplayskip}{3pt}
	
	\title{ GRATE: Granular Recovery of Aggregated Tensor Data by Example}

\author{Ahmed S. Zamzam 
	\and Bo Yang 
	\and Nicholas D. Sidiropoulos} 

%

\maketitle

\begin{abstract}
In this paper, we address the challenge of recovering an accurate breakdown of aggregated tensor data using disaggregation examples. {This problem is motivated by several applications.} For example, given the breakdown of energy consumption at some homes, how can we disaggregate the total energy consumed during the same period at other homes? 
In order to address this challenge, we propose GRATE, a principled method that turns the ill-posed task at hand into a constrained tensor factorization problem. Then, this optimization problem is tackled using an alternating least-squares algorithm. GRATE has the ability to handle exact aggregated data as well as inexact aggregation where some unobserved quantities contribute to the aggregated data. Special emphasis is given to the energy disaggregation problem where the goal is to provide energy breakdown for consumers from their monthly aggregated consumption. Experiments on two real datasets show the efficacy of GRATE in recovering more accurate disaggregation than state-of-the-art energy disaggregation methods.
\end{abstract}
	
	\section{Introduction}
\label{sec:introduction}

Data aggregation occurs in many applications in order to reduce data size, protect fine granularity data, or summarize datasets.  One example is sales data, which are usually aggregated {\em temporally} over quarters~\cite{di1990estimation}. Another example is electric power utility data, where due to the sensitivity of supply and demand data, utility companies often release only {\em geographically} aggregated data as a compromise to the research community~\cite{anderson2018disaggregation}. Aggregated data also appear in applications where high resolution data is unavailable due to lack of precision measurement tools.

Irrespective of the reason why data is only available in aggregated form, recovering data at a finer granularity is valuable for many data mining tasks. Better understanding of individual behaviors can be gained with detailed data~\cite{lenzen2011aggregation}. In addition, using aggregated data in order to describe individual behavior may result in misleading conclusions~\cite{clark1976effects,garrett2003aggregated}. Hence, many research efforts have focused on recovering an accurate breakdown of aggregated data~\cite{chow1971best, rossi1982note, faloutsos1997recovering, liu2017h, Alistair2018homerun}. In order to reconstruct data from aggregated reports, underlying properties of the data, e.g.,  smoothness~\cite{chow1971best,almutairi2018scalable}, periodicity~\cite{faloutsos1997recovering}, or both~\cite{liu2017h, Alistair2018homerun}, are often exploited.

The monthly electricity bill represents a simple aggregation example where the total energy consumption is the sum of the consumption of many devices. Providing customers with a breakdown of their energy consumption is expected to yield $ 12 - 15\% $ savings~\cite{armel2013disaggregation}. At first glance, it may seem impossible to estimate the device-level energy breakdown accurately given only the aggregate monthly consumption. However, the rapid proliferation of smart meters installed in homes {sheds} light on how much each device contributes to the aggregated energy consumption in {\em different types of homes}. Unfortunately, most homes are not equipped with device-level smart meters, making their monthly aggregated consumption the only data available. Utilizing similarity between homes in terms of energy consumption patterns, several approaches have been proposed to estimate the energy breakdown from aggregated data using examples of detailed data from different homes~\cite{batra2016gemello,batra2017matrix,batra2018transferring}. However, these methods do not exploit the distinct structure of the data and the similarity in consumption patterns over years, due to seasonal periodicities.  

We tackle this disaggregation problem using tensor methods.
Tensors are a natural generalization of matrices, where each tensor element is indexed by three or more indices -- as opposed to just two (row index and column index) for matrices. {Due to their natural suitability for modeling multi-relational data,} tensor factorization methods have been extensively used for data mining and analysis tasks~\cite{kolda2005higher,bader2007efficient,karatzoglou2010multiverse, almutairi2019prema}; See~\cite{sidiropoulos2017tensor} for an overview of tensor algebra and applications. The focus of this paper is on disaggregating tensor data. For example, the {aforementioned} energy disaggregation problem for multiple homes can be formalized as a tensor disaggregation problem where each element represents the energy consumption of an appliance at a certain home during a specific month~\cite{batra2018transferring}. Disaggregating tensor data was also considered in~\cite{almutairi2019prema} where {\em multiple} multi-dimensional aggregated views were used to recover tensor data. In this paper, the special algebraic structure of the aggregation mechanism along one dimension is exploited in order to recover more accurate energy breakdowns. 

\vspace{4pt}
\noindent{{\textsc{Informal Problem Definition}}
\vspace{2pt}

The considered task is informally defined as follows.
\begin{itemize}
	\item {\em Given:} aggregated tensor data along one dimension and some detailed examples.\vspace{2pt}
	\item {\em Obtain:} finer-granularity (appliance-level) data.
\end{itemize}

We introduce GRATE, a systematic and efficient approach to address the problem of disaggregating tensor data given some detailed examples. First, the approach constructs a compound tensor by appending the aggregated data to the original tensor along the aggregation mode. Then, the distinct structure of this tensor is translated into constraints on the latent factors of this compound tensor. Although the constraints are nonlinear, they are linear in each latent factor separately which makes the factorization problem amenable to alternating linearly constrained linear least squares solvers. Enforcing these constraints while factoring the data exploits the nature of the data and leads to a more accurate disaggregation. Special emphasis is given to estimating the breakdown of monthly aggregated electricity consumption where the data are represented as a four-way tensor (appliance $ \times $ home $ \times $ month $ \times $ year).

The effectiveness of the GRATE approach is demonstrated using a dataset of breakdowns of energy consumption of many homes in Austin, TX. GRATE exhibits superior estimation accuracy in both exact and inexact (dis)aggregation cases. To summarize, the contributions of the paper are: {\bf c1)} introduction of a constrained tensor factorization problem formulation that respects the particular structure of aggregated tensor data, {\bf c2)}  in order to utilize the correlation between seasons of different years, the energy disaggregation data is modeled using a four-way tensor which enhances the performance over methods using three-way representations, {\bf c3)} careful experimental evaluation using real datasets, which shows that the proposed approach clearly outperforms prior state-of-the-art methods.



	\section{Preliminaries and Problem Formulation}
\label{sec:prob_form}
\subsection{Problem Definition}
Without loss of generality, let us focus on the case of a four-mode tensor. Consider an aggregated three mode tensor $ \tensor{X}_{1} $, which aggregates the tensor elements along the first mode, i.e., the element $ \tensor{X}_{1}(i_2, i_3, i_4) $ is given by
\begin{align}
\tensor{X}_1(i_2, i_3, i_4) = \sum_{i_1 = 1}^{I_1} \tensor{X}(i_1, i_2, i_3, i_4).
\end{align}
Appending this tensor to the original tensor $ \tensor{X} $ results in a four mode tensor $ \aggtensor{X} $ of size $ (I_1 + 1) \times I_2 \times I_3 \times I_4 $. Additional aggregation along different modes can be accounted for as well. However, for the sake of simplicity, this section focuses on a single aggregation along the first mode of the tensor. 

This aggregation structure is present in many applications. 
As an example, in energy consumption data, we have access to the aggregated consumption of all users at any point of time, while the detailed data can be modeled as a tensor. An entry of this tensor represents the energy consumption of an appliance at a home during a time interval~\cite{batra2016gemello}.
Detailed description of the energy disaggregation problem will be presented in Section~\ref{sec:energy}. 

Similar scenarios appear in healthcare~\cite{liu2017h} and retail analytics~\cite{di1990estimation}. Oftentimes, only a subset of the entries of $ \aggtensor{X} $ is observable due to application-specific constraints such as privacy concerns, {or unavailability of fine-grained measuring devices}, or experimental errors affecting the data collection process. In the considered aggregated tensors, most of the misses are expected to be in the form of {fibers} along the aggregation mode of $ \tensor{X} $ as shown in Fig.~\ref{fig:agg_tensor}, i.e., for some $ (i_2, i_3, i_4) $ tuples, the element $ (I_1+1, i_2, i_3, i_4) $ is given but the mode-$ 1 $ fiber of $ \tensor{X} $ indexed by $ (i_2, i_3, i_4) $ is missing. Therefore, $ \aggtensor{X} $ is expressed as $ \aggtensor{X} = \aggtensor{X}^{A} + \aggtensor{X}^M $, where $ \aggtensor{X}^A $ holds the available tensor entries of $ \aggtensor{X} $ and elsewhere is zero and $ \aggtensor{X}^M $ contains the missing entries and is zero elsewhere.

{This work considers two aggregation scenarios, for which it proposes different factorization approaches.}
\begin{enumerate}
	\item {\em Exact aggregation}: in this case, the elements on the mode-$ 1 $ fiber indexed by $ (i_2, i_3, i_4)$ are summed up exactly in the element $ \aggtensor{X}(I_1+1, i_2, i_3, i_4) $.
	\item {\em Inexact aggregation}: in this case, some unobserved positive quantity is added to the aggregation. This happens in the energy disaggregation problem where some unmodeled devices (e.g., light bulbs, small kitchen appliances, portable heaters) contribute to the reported aggregate amount. This can be written as
	\begin{align}\label{eq:inexact_agg}
	\aggtensor{X}(I_1+1, i_2, i_3, i_4) \geq \sum_{i_1=1}^{I_1} \tensor{X}(i_2, i_3, i_4)
	\end{align}
	for any $ i_2, i_3, $ and $ i_4 $.
\end{enumerate}

\vspace{6pt}
\fbox{%
	\parbox{\textwidth- 40pt}{%
		{\bf Problem Statement.} Given a partially observed tensor $ \aggtensor{X} $, our goal is to derive a factorization model that utilizes the special structure of $\aggtensor{X}$ 
		in order to estimate missing data $ \aggtensor{X}^M $ and break down the sums along mode-$ 1 $. 
	}%
}
\vspace{12pt}

Solving the above problem offers far-reaching benefits to our economy and environment. Several studies~\cite{OPower,armel2013disaggregation} have shown that providing itemized electricity bills to the customers fosters better understanding of the energy consumption. Besides, discrepancy in disaggregated consumption data helps in identifying defective devices, which eventually leads to more savings. In addition, utility companies can use disaggregated data to enable better energy recommendations~\cite{beckel2013automatic}. Finally, such detailed data enable more efficient demand response programs and accurate energy consumption forecasting.


Nonetheless, it is challenging to estimate the breakdown of energy consumption given only the aggregated consumption data. 
The availability of detailed data from {a few} homes that are equipped with smart meters provides some hope for solving this problem, as these detailed data can serve as examples on how to disaggregate. 
Yet, these examples are not perfect, due to the presence of unmodeled devices even in the homes equipped with smart meters, and missing data due to measuring errors (e.g., sensor fails to report). This fact poses additional challenges to the energy disaggregation problem.

We tackle these challenges with specifically designed tensor factorization methods. Tensor factorization is appealing as it directly captures the multi-relation structure in the energy consumption data. More importantly, tensor factorization enables easy incorporation of physical constraints, which in turn ensures validity of the results -- this point will become clear in Section~\ref{sec:approach}. Before presenting the proposed method, we briefly review some necessary background on tensor factorization.
\subsection{Tensor Decomposition Preliminaries}

A tensor is a multi-dimensional array indexed by three or more indices. An $ N $-way tensor $ \tensor{X} \in \mathbb{R}^{I_1 \times \cdots \times I_N}$ of rank $ F $ can be represented as
\begin{align}\label{eq:tensor-cpd}
\tensor{X}(i_1, i_2, \cdots, i_N) = \sum_{f=1}^{F}\prod_{n=1}^N{\bf A}_n(i_n, f)
\end{align}
where $ {\bf A}_n \in \mathbb{R}^{I_n \times F} $ is the $ n $-th latent matrix of the canonical polyadic decomposition (CPD) of $ \tensor{X} $. The low rank tensor decomposition decomposes $ \tensor{X} $ into a sum of outer products of columns of $ {\bf A}_1, \ldots, {\bf A}_N $, where each outer product is a rank-one tensor.

The $ n $-th mode unfolding of $ \tensor{X} $ is denoted by $ {\bf X}_{(n)} $, and is obtained by stacking the vectorization of the mode-$ n $ slabs to form a matrix. The $ n $-th mode unfolding can be equivalently written as
\begin{align}
{\bf X}_{(n)} = ({\bf A}_1 \!\odot\! \cdots \!\odot\! {\bf A}_{n-1} \!\odot\! {\bf A}_{n+1} \!\odot\! \cdots \!\odot\! {\bf A}_{N})\ {\bf A}_n^T,
\end{align}
where $\odot$ stands for the Khatri-Rao (column-wise Kronecker) matrix product. We also write $\tensor{X} = \big[[{\bf A}_1, {\bf A}_2, \ldots, {\bf A}_N ]\big]$ to denote that $\tensor{X}$ is the synthesized tensor as in~\eqref{eq:tensor-cpd}.

For a matrix $ {\bf A} $, the notion of Kruskal rank or k-rank will be used in the manuscript. The Kruskal rank of a matrix is defined as follows.
\begin{definition}[Kruskal Rank]
	The Kruskal rank $ k_{\bf A} $ of an $ I \times F $ matrix $ {\bf A} $ is the largest integer $ k $ such that any $ k $ columns of $ {\bf A} $ are linearly independent.
\end{definition}



	\section{Proposed Approach}
\label{sec:approach}
In this section, we present our solution of the problem stated above. 
First, we derive the appropriate constraints brought by data aggregation. Then, the formal problem formulation and algorithm are presented in Section~\ref{sec:alg}.
\subsection{Constraints for disaggregation}
The CPD model of the compound tensor $ \aggtensor{X} $ can be expressed as
\begin{align}
\aggtensor{X} = \sum_{r=1}^{R} \check{{\bf A}}_1(:, r) \circ {\bf A}_2(:, r) \circ  {\bf A}_3(:, r) \circ {\bf A}_4(:, r)  
\end{align}
where the matrix $ \check{\bf A}_1 \in \mathbb{R}^{(I_1+1)\times R} $ represents the first mode latent factor of $ \aggtensor{X} $. 
{Now, let us define two vectors of ${\bf e}_s, {\bf e}_b \in \mathbb{R}^{I_1 +1}$ as follows
\begin{align}
{\bf e}_s &= [ \boldsymbol{0}_{I_1}, \quad  1]^T,\\
{\bf e}_b &= [ \boldsymbol{1}_{I_1}, \quad 0]^T.
\end{align}}
Also, we define a $ 4 $-way tensor $ \tensor{G}\in\mathbb{R}^{R \times I_2 \times I_3 \times I_4} $ as follows
\begin{align}\label{key}
\tensor{G}(r, i_2, i_3, i_4) = {\bf A}_2 (i_2, r) {\bf A}_3 (i_3, r) {\bf A}_4 (i_4, r).
\end{align}
Additionally, let the matrix $ {\bf G}_{(1)} $ be the mode-$ 1 $ matricization of $ \tensor{G} $. Next, we derive the latent domain constraints required for factoring the exact and inexact aggregated tensor. 

\begin{proposition}
	\label{prop:exact_equi}
	For a tensor $ \aggtensor{X} \in \mathbb{R}^{(I_1 +1) \times I_2 \times I_3 \times I_4} $ such that
	\begin{align}\label{eq:aggr}
	\aggtensor{X}(I_1+1, i_2, i_3, i_4) = \sum_{i_1 = 1}^{I_1} \aggtensor{X}(i_1, i_2, i_3, i_4) 
	\end{align}
	for all $ i_2, i_3, $ and $ i_4 $, the low-rank CPD factors of $ \aggtensor{X} $ satisfy
	{  \begin{align}
	{\bf G}_{(1)}\ \check{{\bf A}}_1^T ({\bf e}_s - {\bf e}_b) = {\bf 0}.
	\end{align}
	In addition, if the matrix $ {\bf G}_{(1)} $ is full rank, then $ \check{{\bf A}}_1^T ({\bf e}_s - {\bf e}_b) = {\bf 0}$ is equivalent to \eqref{eq:aggr}.}
\end{proposition}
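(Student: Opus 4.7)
The plan is to derive both claims by directly substituting the CPD expansion of $\aggtensor{X}$ into the aggregation condition \eqref{eq:aggr} and rearranging.

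First, I would use the two selector vectors to rewrite each side in terms of $\check{\bf A}_1$: note that ${\bf e}_s^T \check{\bf A}_1(:,r) = \check{\bf A}_1(I_1+1, r)$ picks out the ``sum-row'' entry of column $r$, while ${\bf e}_b^T \check{\bf A}_1(:,r) = \sum_{i_1=1}^{I_1} \check{\bf A}_1(i_1, r)$ is the sum of the first $I_1$ entries. Plugging the rank-$R$ CPD into both sides of \eqref{eq:aggr} and collecting the sum over $r$, the aggregation identity becomes, for every triple $(i_2, i_3, i_4)$,
\[
\sum_{r=1}^R \alpha_r \, {\bf A}_2(i_2,r){\bf A}_3(i_3,r){\bf A}_4(i_4,r) = 0, \qquad \alpha_r := ({\bf e}_s - {\bf e}_b)^T \check{\bf A}_1(:,r).
\]
Since $\tensor{G}(r,i_2,i_3,i_4) = {\bf A}_2(i_2,r){\bf A}_3(i_3,r){\bf A}_4(i_4,r)$ is precisely the $((i_2,i_3,i_4),r)$ entry of the mode-$1$ matricization ${\bf G}_{(1)}$, and since $(\alpha_1,\dots,\alpha_R)^T = \check{\bf A}_1^T({\bf e}_s - {\bf e}_b)$, this family of scalar identities packs into the single vector equation ${\bf G}_{(1)}\, \check{\bf A}_1^T({\bf e}_s - {\bf e}_b) = {\bf 0}$, which is the first claim.

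For the equivalence under the full-rank hypothesis, the argument is a one-line linear-algebra remark: ${\bf G}_{(1)}$ has $R$ columns, so full (column) rank means ${\bf G}_{(1)}{\bf x} = {\bf 0} \Rightarrow {\bf x} = {\bf 0}$. Applying this with ${\bf x} = \check{\bf A}_1^T({\bf e}_s - {\bf e}_b)$ gives one direction; the converse is immediate by left-multiplying by ${\bf G}_{(1)}$. Combined with the first part, this yields that \eqref{eq:aggr} is equivalent to $\check{\bf A}_1^T({\bf e}_s - {\bf e}_b) = {\bf 0}$.

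I do not anticipate a genuine technical obstacle here; the result is a direct algebraic consequence of the CPD model together with the definition of $\tensor{G}$. The only care points are (i) keeping the matricization convention consistent with the one already introduced in the preliminaries, so that ${\bf G}_{(1)}$ is of size $(I_2 I_3 I_4)\times R$ and the product ${\bf G}_{(1)}\check{\bf A}_1^T({\bf e}_s - {\bf e}_b)$ is dimensionally well defined, and (ii) interpreting ``full rank'' in the proposition as full \emph{column} rank, which is what allows the final implication to be inverted.
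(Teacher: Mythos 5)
Your proposal is correct and follows essentially the same route as the paper's own proof: substitute the CPD into \eqref{eq:aggr}, collect the per-$r$ factor $({\bf e}_s - {\bf e}_b)^T \check{\bf A}_1(:,r)$, stack over all $(i_2,i_3,i_4)$ to obtain ${\bf G}_{(1)}\,\check{\bf A}_1^T({\bf e}_s - {\bf e}_b) = {\bf 0}$, and invoke full column rank of ${\bf G}_{(1)}$ for the equivalence. Your explicit note that ``full rank'' must mean full column rank is a correct reading of the paper's intent, so no changes are needed.
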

Note that \eqref{eq:aggr} means that the last slab of tensor $\aggtensor{X}$ is the aggregation of all the slabs that precede it.
\begin{proof}
	Fixing the indices $ i_2 $ through $ i_4 $, we have
	\begin{align*}
	\aggtensor{X}(I_1\!+\!1, i_2, i_3, i_4) = \sum_{i_1 = 1}^{I_1} \aggtensor{X}(i_1, i_2, i_3, i_4) \Leftrightarrow 	
	\end{align*}
	\begin{align*}
	\begin{aligned}
	&\sum_{r =1}^{R}\! \check{\bf A}_1(I_1\!+\!1, r) {\bf A}_2(i_2, r) {\bf A}_3(i_3, r){\bf A}_4(i_4, r)\! =\\
	&\qquad \qquad \qquad \qquad \sum_{i_1 = 1}^{I_1}\!  \sum_{r =1}^{R}\! \check{\bf A}_1(i_1,\! r) {\bf A}_2(i_2,\! r) {\bf A}_3(i_3,\! r) {\bf A}_4(i_4, r) \Leftrightarrow\\
	\end{aligned}
	\end{align*}
	\begin{align*}
	\begin{aligned}
	&\sum_{r =1}^{R} {\bf A}_2(i_2, r) {\bf A}_3(i_3, r) {\bf A}_4(i_4, r)  \check{\bf A}_1(I_1+1, r) =\\
	&\qquad \qquad \qquad \qquad \sum_{r =1}^{R}\!  {\bf A}_2(i_2, r) {\bf A}_3{i_3, r} {\bf A}_4(i_4, r) \sum_{i_1 = 1}^{I_1}\! \check{\bf A}_1(i_1, r)\Leftrightarrow
	\end{aligned}
	\end{align*}
	{ \begin{align*}
	{\bf g}_{i_2, i_3, i_4}^T \check{{\bf A}}_1^T {\bf e}_s &= {\bf g}_{i_2, i_3, i_4}^T \check{{\bf A}}_1^T {\bf e}_s\\		\notag
	{\bf g}_{i_2, i_3, i_4}^T \check{{\bf A}}_1^T ({\bf e}_s -{\bf e}_b) &= {\bf 0}.
	\end{align*} 
	where $ {\bf g}_{i_2, i_3, i_4} $ is a vector of length $ R $ and holds $ \big({\bf A}_2(i_2, r) {\bf A}_3(i_3, r) {\bf A}_4(i_4, r)\big)$ at the $ r $-th location.
	Applying this to all possible combinations of indices $ i_2, i_3, i_4 $, we obtain
	\begin{align}\label{eq:aggr_latent}
	{\bf G}_{(1)} \check{{\bf A}}_1^T ({\bf e}_s - {\bf e}_b) = {\bf 0}.
	\end{align}
	Indeed, if the matrix $ {\bf G}_{(1 )} $ is full rank, then \eqref{eq:aggr_latent} implies $\check{{\bf A}}_1^T {\bf e}_s = \check{{\bf A}}_1^T {\bf e}_b$. In addition, it is easy to see that $\check{{\bf A}}_1^T {\bf e}_s = \check{{\bf A}}_1^T {\bf e}_b$ implies \eqref{eq:aggr_latent}, which is equivalent to \eqref{eq:aggr}.	
	Therefore, $\check{{\bf A}}_1^T ({\bf e}_s - {\bf e}_b) = \boldsymbol{0}$ is equivalent to \eqref{eq:aggr} when $ {\bf G}_{(1 )} $ is full rank.}
\end{proof}

{ The above proposition asserts that when ${\bf G}_{(1)}$ is full rank, the latent constraint $\check{{\bf A}}_1^T ({\bf e}_s - {\bf e}_b) = \boldsymbol{0}$ is equivalent to the constraint \eqref{eq:aggr}. 
This is important, as we ultimately would like to enforce constraints on latent factors -- this is much easier. When the condition is not satisfied, there may exist ${\bf A}_1$ which satisfy $\check{{\bf A}}_1^T ({\bf e}_s - {\bf e}_b) \neq \boldsymbol{0}$ while \eqref{eq:aggr_latent} still holds. 
In such a case, enforcing $\check{{\bf A}}_1^T ({\bf e}_s - {\bf e}_b) = \boldsymbol{0}$ can result in sub-optimal performance as it restricts the solution space. Proposition~\ref{prop:exact_equi} shows that enforcing the latent constraint $\check{{\bf A}}_1^T ({\bf e}_s - {\bf e}_b) = \boldsymbol{0}$ is without loss of generality and optimality when ${\bf G}_{(1)}$ is full rank.}

Next, we derive a condition on the latent factors which ensures that $ {\bf G}_{(1)} $ is full rank. Notice that $ {\bf G}_{(1)} $ can also be written as the Khatri-Rao product of the latent factors of the dimensions that are not aggregated, i.e., $ {\bf G}_{(1)} = ({\bf A}_2 \odot {\bf A}_3 \odot {\bf A}_4)$.
\begin{proposition}
	The matrix $ {\bf G}_{(1)} \in \mathbb{R}^{I_2I_3I_4 \times R} $ is full column rank if 
	\begin{equation}\label{eq:fullrank-cond}
		k_{{\bf A}_2} + k_{{\bf A}_3} + k_{{\bf A}_4} \geq R + 2.
	\end{equation}
\end{proposition}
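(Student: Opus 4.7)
The plan is to recognize that ${\bf G}_{(1)}$ is precisely the three-way Khatri-Rao product ${\bf A}_2 \odot {\bf A}_3 \odot {\bf A}_4$, as noted in the sentence preceding the proposition, and then to invoke the classical Sidiropoulos--Bro lower bound on the Kruskal rank of a Khatri-Rao product. Recall this says that for any two matrices ${\bf U}$ and ${\bf V}$ with $R$ columns each,
\begin{equation*}
k_{{\bf U}\odot{\bf V}} \;\geq\; \min\bigl(k_{\bf U}+k_{\bf V}-1,\ R\bigr),
\end{equation*}
with the convention that when both k-ranks are at least $1$ the sum is well defined. Since full column rank of a matrix with $R$ columns is equivalent to Kruskal rank equal to $R$, the goal is to show that applying this bound twice yields $k_{{\bf G}_{(1)}}\geq R$ under the hypothesis \eqref{eq:fullrank-cond}.

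First I would apply the Sidiropoulos--Bro bound to ${\bf A}_2\odot{\bf A}_3$ to obtain
\begin{equation*}
k_{{\bf A}_2\odot{\bf A}_3} \;\geq\; \min\bigl(k_{{\bf A}_2}+k_{{\bf A}_3}-1,\ R\bigr).
\end{equation*}
Next I would apply the bound again, this time to $({\bf A}_2\odot{\bf A}_3)\odot{\bf A}_4$, giving
\begin{equation*}
k_{{\bf G}_{(1)}} \;\geq\; \min\bigl(k_{{\bf A}_2\odot{\bf A}_3}+k_{{\bf A}_4}-1,\ R\bigr) \;\geq\; \min\bigl(k_{{\bf A}_2}+k_{{\bf A}_3}+k_{{\bf A}_4}-2,\ R\bigr).
\end{equation*}
Substituting the hypothesis $k_{{\bf A}_2}+k_{{\bf A}_3}+k_{{\bf A}_4}\geq R+2$ makes the inner sum at least $R$, so $k_{{\bf G}_{(1)}}\geq R$. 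Since ${\bf G}_{(1)}$ has exactly $R$ columns, this forces all $R$ columns to be linearly independent, i.e., full column rank, completing the argument.

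The proof is essentially a direct invocation of a well-known lemma, so there is no deep obstacle; the only subtlety worth flagging is the edge case where one of the k-ranks is zero (i.e., some ${\bf A}_n$ has a zero column), in which case the bound becomes vacuous. Under the mild tacit assumption that each ${\bf A}_n$ has no zero columns (so $k_{{\bf A}_n}\geq 1$), the chain above is valid. If desired, a self-contained derivation of the two-matrix Khatri-Rao k-rank bound can be inserted—it proceeds by picking any $R$ columns of ${\bf U}\odot{\bf V}$, showing that a linear dependence among them, when viewed as vectorized rank-one matrices ${\bf u}_r{\bf v}_r^T$, would contradict the assumed k-ranks of ${\bf U}$ and ${\bf V}$—but citing the established result keeps the proof compact.
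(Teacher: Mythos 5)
Your proposal is correct and follows essentially the same route as the paper: the paper also identifies ${\bf G}_{(1)} = {\bf A}_2 \odot {\bf A}_3 \odot {\bf A}_4$, applies the Sidiropoulos--Bro Khatri--Rao k-rank bound twice to get $k_{{\bf G}_{(1)}} \geq \min\{k_{{\bf A}_2}+k_{{\bf A}_3}+k_{{\bf A}_4}-2, R\}$, and concludes via $\operatorname{rank}({\bf G}_{(1)}) \geq k_{{\bf G}_{(1)}} \geq R$. Your explicit flagging of the zero-column (k-rank zero) edge case is a small point of added care not present in the paper, but the argument is otherwise the same.
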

\begin{proof}
	Since $ {\bf G}_{(1)} = ({\bf A}_2 \odot {\bf A}_3 \odot {\bf A}_4)$, using the bound on the Kruskal rank of a Khatri-Rao product from~\cite{sidiropoulos2000uniqueness, ten2000k}, we obtain
	\begin{align}
		k_{{\bf G}_{(1)}} &= \min \{ \min \{ k_{{\bf A}_2} + k_{{\bf A}_3} - 1 , R \} + k_{{\bf A}_4} - 1, R \}\nonumber\\
			&= \min \{ \min \{ k_{{\bf A}_2} + k_{{\bf A}_3} + k_{{\bf A}_4} - 2 , R + k_{{\bf A}_4} - 1\}, R \}\nonumber\\
			&= \min \{ k_{{\bf A}_2} + k_{{\bf A}_3} + k_{{\bf A}_4} - 2 , R \}\nonumber
	\end{align}
	By definition, $ \text{rank}({\bf G}_{(1)}) \geq k_{{\bf G}_{(1)}} $. Therefore, if $ k_{{\bf A}_2} + k_{{\bf A}_3} + k_{{\bf A}_4} \geq R + 2 $, then $ k_{{\bf G}_{(1)}} \geq R$ and ${\bf G}_{(1)}$ is full column rank.
\end{proof}

Notice that the dimension of $ {\bf A}_n $ is $ I_n \times R $ where $ I_n $ is often larger than $ R $ for real world data. Therefore, the condition~\eqref{eq:fullrank-cond} is rather mild. Using the same line of argument as in \cite{JiangSidtBerge2001} with real (decaying) in place of complex (constant modulus) exponentials, it can be shown that $I_2 I_3 I_4 \geq R$ is sufficient almost surely -- i.e., for almost all ${\bf A}_2,{\bf A}_3,{\bf A_4}$ (except for a set of measure zero in $\mathbb{R}^{(I_2 + I_3 + I_4)R}$).
{ Hence, in the proposed algorithm for exact aggregation, the constraint $ \check{{\bf A}}_1^T ({\bf e}_s - {\bf a}_b) = \boldsymbol{0} $ is used to ensure that the factorization model satisfies the aggregation structure of $ \aggtensor{X} $. }

Next, we consider the inexact aggregation case where the reported aggregate amount includes consumption from unmodeled appliances. This type of aggregation appears in the energy disaggregation scenario which will be discussed in Section~\ref{sec:energy}.

\begin{proposition}
	For a tensor $ \aggtensor{X} \in \mathbb{R}_+^{(I_1 +1) \times I_2 \times I_3 \times I_4} $ such that
	\begin{align}\label{eq:inexact}
	\aggtensor{X}(I_1+1, i_2, i_3, i_4) \geq \sum_{i_1 = 1}^{I_1} \aggtensor{X}(i_1, i_2, i_3, i_4) 
	\end{align}
	for all $ i_2 $, $ i_3 $, and $ i_4 $, the low-rank CPD factors of $ \aggtensor{X} $ satisfy
	{\begin{align}\label{eq:geq-cond}
	{\bf G}_{(1)} \check{{\bf A}}_1^T ({\bf e}_s - {\bf e}_b) \geq {\bf 0}.
	\end{align}}
\end{proposition}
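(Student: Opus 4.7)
The plan is to mimic the algebraic manipulation used in the proof of Proposition~\ref{prop:exact_equi}, but carrying an inequality instead of an equality throughout. The key observation is that every manipulation in that earlier proof was a linear rearrangement (swapping the order of summations, factoring out common multiplicands from sums, collecting the result in vector form), and such rearrangements preserve the direction of an inequality between two real numbers. So I expect no structural difficulty; only care about signs and the direction of the inequality is needed.

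More concretely, I would first fix an arbitrary triple $(i_2, i_3, i_4)$ and substitute the CPD expression for $\aggtensor{X}$ on both sides of~\eqref{eq:inexact}. This gives an inequality between two scalars, each a sum over the rank index $r$ of products of the four latent factors. By the same reordering as in Proposition~\ref{prop:exact_equi}, I can pull ${\bf A}_2(i_2,r){\bf A}_3(i_3,r){\bf A}_4(i_4,r)$ out as a common factor and collect the remaining $\check{{\bf A}}_1$ entries into inner products with ${\bf e}_s$ (for the left-hand side, picking the last row $I_1+1$) and with ${\bf e}_b$ (for the right-hand side, summing rows $1$ through $I_1$). Defining ${\bf g}_{i_2,i_3,i_4}$ as in the exact-case proof, the scalar inequality takes the form
\begin{align*}
{\bf g}_{i_2, i_3, i_4}^T \check{{\bf A}}_1^T {\bf e}_s \;\geq\; {\bf g}_{i_2, i_3, i_4}^T \check{{\bf A}}_1^T {\bf e}_b,
\end{align*}
which rearranges to ${\bf g}_{i_2, i_3, i_4}^T \check{{\bf A}}_1^T ({\bf e}_s - {\bf e}_b) \geq 0$.

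Finally, I would range $(i_2, i_3, i_4)$ over all index combinations. Since the rows of ${\bf G}_{(1)}$ are precisely the vectors ${\bf g}_{i_2, i_3, i_4}^T$ (stacked in the same order induced by the mode-$1$ matricization), stacking the scalar inequalities yields the claimed vector inequality ${\bf G}_{(1)} \check{{\bf A}}_1^T ({\bf e}_s - {\bf e}_b) \geq {\bf 0}$, where the inequality is understood entrywise.

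The only subtle point worth flagging is that, unlike in the exact case, the converse direction does not follow even when ${\bf G}_{(1)}$ is full column rank: full column rank is enough to invert a linear equation but not a linear inequality system, since a nonnegative combination of the rows of ${\bf G}_{(1)}$ is not forced to be nonnegative pointwise. So the proposition is stated (and should be stated) only as a necessary condition on the factors, and no equivalence claim is expected or needed. The fact that $\aggtensor{X} \in \mathbb{R}_+^{(I_1+1)\times I_2 \times I_3 \times I_4}$ does not enter the derivation directly; it merely justifies the physical interpretation of the inexact aggregation model in~\eqref{eq:inexact}.
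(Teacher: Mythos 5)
Your proposal is correct and follows exactly the route the paper intends: the paper's own proof simply says to repeat the argument of Proposition~\ref{prop:exact_equi} with every equality replaced by an inequality, which is precisely what you carry out (and your observation that the full-rank converse fails here matches the paper's remark following the proposition). No gaps.
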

\begin{proof}
	The proposition can be proven by following the steps of the proof of Proposition~\ref{prop:exact_equi} by replacing every equality ($ = $) with an inequality ($ \geq $).
\end{proof}

{Interestingly, the condition~\eqref{eq:geq-cond} is not equivalent to $ \check{{\bf A}}_1^T ({\bf e}_s - {\bf e}_b) \geq {\bf 0} $ even if the latent factors are nonnegative. Even though enforcing $ \check{{\bf A}}_1^T ({\bf e}_s - {\bf e}_b) \geq \boldsymbol{0} $ in the case of nonnegative latent factors ensures~\eqref{eq:inexact}, it hinders the ability of the model to fit the data.} The previous proposition asserts that enforcing~\eqref{eq:geq-cond} is equivalent to~\eqref{eq:inexact}, and hence, can be enforced on the latent domain factors to enforce the specific structure of $ \aggtensor{X} $ without restricting the model.

\subsection{Proposed algorithm: GRATE}
\label{sec:alg}
Considering the exact aggregation case, a common way to estimate the missing values of the tensor is to identify its latent factors by adopting a least-squares criterion. The proposed approach constraints the latent domain factors according to the structure of the compound aggregation tensor. Therefore, assuming that the condition~\eqref{eq:fullrank-cond} is satisfied, we obtain the following optimization task
\begin{subequations}\label{eq:optimization-exact}
	\begin{align}
	&\underset{\check{\bf A}_1, {\bf A}_2, {\bf A}_3, {\bf A}_4}{\min}\quad \bigg\| \mathcal{P}_{\Omega} \Big(\! \aggtensor{X}\! -\! \big[[\check{\bf A}_1,\! {\bf A}_2,\! {\bf A}_3,\! {\bf A}_4]\big] \!\Big)\bigg\|_F^2\label{eq:cost-fun-exact}\\ 	\label{eq:constraint-exact}
	&\ \ \text{subject to}\qquad  {\check{{\bf A}}_1^T ({\bf e}_s - {\bf e}_b) = {\boldsymbol{0}}}
	\end{align}
\end{subequations}
where the cost function penalizes the fitting error of the CPD model, and the operator $ \mathcal{P}_{\Omega} $ ensures that the loss function only includes the fitting errors in the observed entries of the tensor. Note that, the optimization problem~\eqref{eq:optimization-exact} is a standard CPD fitting problem~\cite{sidiropoulos2017tensor} with an equality constraint on one of the factors. In order to solve this problem, we use the standard ALS framework while the constraint~\eqref{eq:constraint-exact} is taken into account while solving for the factor $ \check{\bf A}_1 $. This constraint can be also accounted for by projecting the feasibility space onto the null space of the linear equation~\eqref{eq:constraint-exact}. That is, if the $ j $-th row of $ \check{\bf A}_1 $ is denoted by $ \check{\bf a}_{1,j}^T $, then we write $ \check{\bf A}_1 $ in the optimization problem as
$
\check{\bf A}_1 = [\check{\bf a}_{1,1}^T\quad \ldots\quad \check{\bf a}_{1,I_1}^T\quad \sum_{i=1}^{I_1}\check{\bf a}_{1,i}^T]^T
$. Regularization terms can be also readily incorporated into problem~\eqref{eq:optimization-exact} to promote low rank~\cite{yang2016tensor} or sparse solutions~\cite{sidiropoulos2017tensor}. Further, effective and fast convergent alternating iteratively weighted least-squares solvers have been developed to deal with the resulting non-smooth regularization terms. 

Now, we focus on the inexact aggregation scenario which is often encountered in the energy disaggregation example. Given the structure of the tensor, we develop a constrained tensor factorization algorithm to infer the latent factor matrices, and therefore, estimate $ \aggtensor{X}^M $. Toward this end, consider the following optimization task
\begin{subequations}\label{eq:optimization}
	\begin{align}
&\underset{\check{\bf A}_1, {\bf A}_2, {\bf A}_3, {\bf A}_4}{\min}\quad \bigg\| \mathcal{P}_{\Omega} \Big(\! \aggtensor{X}\! -\! \big[[\check{\bf A}_1,\! {\bf A}_2,\! {\bf A}_3,\! {\bf A}_4]\big] \!\Big)\bigg\|_F^2\label{eq:cost-fun}\\ 	\label{eq:constraint}
&\ \ \text{subject to}\quad {\big({\bf A}_2 \odot {\bf A}_3 \odot {\bf A}_4 \big) \check{{\bf A}}_1^T ({\bf e}_s - {\bf e}_b) \geq {\boldsymbol{0}}}
	\end{align}
\end{subequations}
where the cost function is the least-squares fitting error of the CPD model, and the operator $ \mathcal{P}_{\Omega}(\cdot) $ accounts only for the observed elements $ \aggtensor{X} $. The inequality constraint~\eqref{eq:constraint} reflects the special structure of the aggregated tensor.

The optimization problem in~\eqref{eq:optimization} is nonconvex due to the multilinear term in the cost function, and the nonlinear constraint. Next, we develop an efficient algorithm to solve~\eqref{eq:optimization} based on alternating least squares (ALS).

Notice that the optimization problem~\eqref{eq:optimization} is a constrained linear least squares optimization for each latent factor, when all other factors are kept fixed. The GRATE approach utilizes the ALS framework which updates each latent factor separately in a cyclic fashion. Let $ \check{\bf x}_n $ and $ {\bf a}_n $ denote the vectorization of $ {\bf X}_{(n)} $ and the latent factor matrix $ {\bf A}_n $, respectively. 
{Define $ {\bf e} \in \mathbb{R}^{I_1+1} $ such that $ {\bf e} := {\bf e}_b - {\bf e}_s$.}  Also, we denote by ${\bf D}_{\bf w}$ a diagonal matrix that holds the elements of the row vector ${\bf w}$ on the diagonal.
Algorithm~\ref{alg:als} summarizes the proposed iterative approach to solve~\eqref{eq:optimization}.

\SetKw{inputA}{Input: }
\SetKw{outputA}{Output: }
\SetKw{initA}{Initialization: }
\DontPrintSemicolon
\begin{algorithm}[h]
	\renewcommand{\arraystretch}{1.2}
	\caption{GRATE algorithm for inexact disaggregation}
	{
		\inputA{ $ \aggtensor{X}^A $}\;\\
		\outputA{$ \aggtensor{X}^M $, $ \check{\bf A}_1 $, $ {\bf A}_2 $, $ {\bf A}_3 $}, and $ {\bf A}_3 $\;\\
		\initA{Intialize $ {\bf A}_2 $, $ {\bf A}_3 $, and $ {\bf A}_4 $ randomly}\;\\
		\Repeat{cost converges}{
			\begin{align}
			\check{\bf A}_1 \Leftarrow& \  \underset{\check{\bf A}_1}{\arg\min} \big\| \mathcal{P}_{\Omega} \big(\check{\bf X}_{(1)} - ({\bf A}_2 \odot {\bf A}_3 \odot {\bf A}_4)\ \check{\bf A}_1^T \big) \big\|_2^2\nonumber\\[-5pt]
			&\ \text{s. to}\quad ({\bf A}_2 \odot {\bf A}_3 \odot {\bf A}_4) \ \check{{\bf A}}_1^T\ {\bf e}\ \!\leq\ \!{\bf 0}\nonumber
			\end{align}
			\begin{align}
			{\bf A}_2 \Leftarrow& \  \underset{{\bf A}_2}{\arg\min} \big\| \mathcal{P}_{\Omega} \big(\check{\bf X}_{(2)} - (\check{\bf A}_1 \odot {\bf A}_3 \odot {\bf A}_4)\ {\bf A}_2^T \big) \big\|_2^2\nonumber\\[-5pt]
			&\ \text{s. to}\quad\  \big(({\bf A}_3 \odot {\bf A}_4) {\bf D}_{{\bf e}^T \check{\bf A}_1}\big) \ {\bf A}_2^T\ \!\leq\ \!{\bf 0}\nonumber
			\end{align}
			\begin{align}
			{\bf A}_3 \Leftarrow& \  \underset{{\bf A}_3}{\arg\min} \big\| \mathcal{P}_{\Omega} \big(\check{\bf X}_{(3)} - (\check{\bf A}_1 \odot {\bf A}_2 \odot {\bf A}_4)\ {\bf A}_3^T \big) \big\|_2^2\nonumber\\[-5pt]
			&\ \text{s. to}\quad\ \big(({\bf A}_2 \odot {\bf A}_4) {\bf D}_{{\bf e}^T \check{\bf A}_1}\big) \ {\bf A}_3^T\ \!\leq\ \!{\bf 0}\nonumber
			\end{align}
			\begin{align}
			{\bf A}_4 \Leftarrow& \ \underset{{\bf A}_4}{\arg\min} \big\| \mathcal{P}_{\Omega} \big(\check{\bf X}_{(4)} - (\check{\bf A}_1 \odot {\bf A}_2 \odot {\bf A}_3)\ {\bf A}_4^T \big) \big\|_2^2\nonumber\\[-5pt]
			&\ \text{s. to}\quad\ \big(({\bf A}_2 \odot {\bf A}_3) {\bf D}_{{\bf e}^T \check{\bf A}_1}\big) \ {\bf A}_4^T\ \!\leq\ \!{\bf 0}\nonumber
			\end{align}	
		}
	}
	\label{alg:als}
\end{algorithm}

Using the ALS approach, a constrained linear LS problem is solved per iteration which can be efficiently solved using off-the-shelf interior point method solvers. In addition, if the latent factor matrices are nonnegative, the nonnegativity constraint can be added to the subproblems without altering the nature of the subproblems. In the next section, we present the main application that we consider in this paper.	
	\section{Energy Disaggregation}
\label{sec:energy}



{\emph{Energy disaggregation}} (energy breakdown) refers to the process of inferring the energy consumption of each individual device or appliance from the aggregated energy consumption\footnote{With a slight abuse of notation, we use `device' and `appliance' interchangeably throughout the paper.}. The disaggregation process can be performed at different temporal resolutions. For example, the aggregate energy time series at high frequency (seconds or minutes) can be disaggregated into appliance-wise consumption. Due to the nature of data in this case, time series analysis tools are widely used to breakdown the aggregate energy time series~\cite{kim2011unsupervised,zhong2014interleaved,almutairi2018scalable}. In this paper, we focus on disaggregating energy consumption at lower frequency (months), 
as this is the most basic feedback given to consumers. That is, for a monthly consumption of $ 1000 $ kWh, the proposed approach may suggest that the HVAC contribution is $ 475 $ kWh while the fridge monthly consumption is $ 60 $ kWh, and so on. 
Providing energy breakdown facilitates informed decision making. When the consumers have access to the energy consumption of each device, they are likely to reduce their consumption and identify misconfigured appliances~\cite{katipamula2005methods}. 
Unfortunately, installing specific hardware at each home in order to obtain energy breakdown is infeasible. {We develop a non-intrusive load monitoring (NILM) approach, which tries to estimate the appliance-wise consumption from a single meter measuring the aggregated home energy consumption.} Utilizing disaggregated data from relatively few homes with per-appliance measurements, we aim at achieving accurate disaggregation of the consumption at homes with a single meter. The proposed algorithm builds on our constrained tensor factorization approach presented in the previous section to breakdown the aggregated monthly energy consumption.


\begin{figure}[t]
	\centering
	\includegraphics[width=0.75\textwidth]{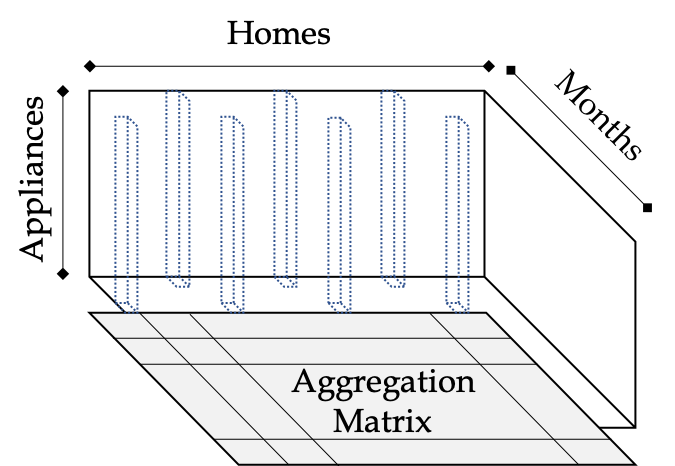}
	\caption{Example of a single year tensor. }
	\label{fig:agg_tensor}
\end{figure}

We construct a $ 4 $-way tensor that includes the monthly detailed consumption at different homes. The dimensions of the tensor represent devices, homes, months, and years. This way we can also exploit yearly periodicities in the energy consumption data, which is typical for high-consumption appliances such as furnace, heater, and A/C. The $ 3 $-way tensor containing the aggregated energy consumption at the considered homes for all months and years is appended to the original tensor along the first mode. Let energy tensor be denoted $ \aggtensor{E} \in \mathbb{R}_+^{I_1+1 \times I_2 \times I_3 \times I_4}$ where $ I_1 $ denotes the number of the devices considered for disaggregation, $ I_2 $ is the number of homes considered, $ I_3 = 12 = $the number of months in a year, and $ I_4 $ denotes the number of years for which data is available.

Notice that we propose the use of the years as a fourth mode instead of dealing with the same months of different years as additional items along the third mode~\cite{batra2018transferring}. Our specific construction of the tensor not only captures seasonal effects, e.g., the energy consumption of HVAC during July and August of a certain year is likely to be similar to other years, but it also results in fewer parameters of the CPD decomposition than the $ 3 $-way construction. This allows for the use for higher ranks in the decomposition while preventing overfitting. Fig.~\ref{fig:agg_tensor} shows a $ 3 $-way tensor which results from considering a single year. The aggregation in this case is a matrix which is appended along the first dimension as depicted.

It is unavoidable that the aggregate energy consumption will be an {\em inexact} aggregation of the consumption of known / monitored devices. This is due to the presence of various (typically low- to medium-power) unmodeled devices/appliances which contribute to the aggregated consumption. Therefore, the energy breakdown application is a case of inexact aggregation, i.e,
$$ \aggtensor{E}(I_1+1, i_2, i_3, i_4) \geq \sum_{i_1=1}^{I_1} \aggtensor{E}(i_1, i_2, i_3, i_4) $$
for all $ i_2 $, $ i_3 $, and $ i_4 $. The next section provides details regarding the simulation setup and presents the algorithm test results.
	\section{Numerical Results}
\label{sec:numerical}
In this section we assess the performance of GRATE in disaggregating aggregated tensor data. 
In the first experiment, we use an exact disaggregation of semi-real energy consumption data. Then, we demonstrate the effectiveness of GRATE in recovering accurate breakdown of aggregated energy consumption data which is aggregated inexactly. The energy disaggregation simulations include two simulation setups which assess the performance of the approach when whole slabs are missing as well as missing fibers, which correspond to unavailability of detailed measurements for a subset of homes, and a subset of months at different homes, respectively.

\subsection{Energy Disaggregation: Evaluation Dataset}

In order to evaluate the performance of the proposed GRATE approach, we utilize the dataset~\cite{parson2015dataport}. Due to the relative novelty of the task, public datasets for multi-year energy consumption are very rare. 
One exception is the Dataport dataset which includes data from homes located in multiple cities in the United States collected over several years. Most of these homes are located in Austin, TX. Therefore, we consider only these homes as it is likely that their energy consumption behavior follows similar patterns. {The data is available in $ 1 $-hour resolution. Since we are interested in disaggregating monthly consumption, we aggregate readings to monthly level in order to evaluate the algorithms.} 

Some of the homes include energy consumption of devices that are not measured at most of the homes. So, we focus only on the measurements related to devices available at most of the homes. The devices considered in our simulations are heating, ventilation, and air conditioning (HVAC), furnace (FN), oven (OV), clothes washer and drier (WD), microwave (MW), and refrigerator (RF). These devices exhibit different energy consumption patterns, i.e., while the HVAC consumption is weather dependent, the WD and MW consumptions are not likely to be dependent on the season or the weather. The number of homes in Austin that include measurements from all these six appliances is $ 103 $, which we choose for our simulations. We sample monthly consumption data from September $ 2014 $ to August $ 2018 $, which includes $ 4 $ years. Therefore, the size of the 4-way tensor, including the aggregate consumption measurements, is $ 7 \times 103 \times 12 \times 4 $.

Note that some homes did not participate in the study for few months during this $ 4 $-year period. Hence, we assume that the data is missing for these months at these homes instead of treating them as zeros. Also, in a few instances, the aggregate monthly consumption appeared to be less than the sum of the measured detailed consumption. In these cases, we consider the aggregate consumption as missing. Fig.~\ref{fig:hist} shows the histogram of the percentage of missing values for the considered users. For example, $ 0.35 $ of the users have less than $ 4\% $ missing readings while almost $ 0.05 $ have more than $ 50\% $ missing values. These unavailable values are treated as missing during all simulations, and are not considered for testing the estimation accuracy as their ground truth values are not available.

\begin{figure}[ht]
	\centering
	\includegraphics[width=0.55\columnwidth]{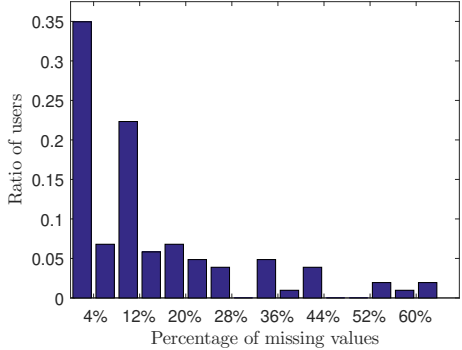}
	\caption{Histogram of percentage of missing values in the original dataset after preprocessing.}
	\label{fig:hist}
\end{figure}

\subsection{Baselines}
Many enegry disaggregation approaches have been proposed in the literature focusing on disaggregating real-time energy consumption. These approaches utilize temporal smoothness of the aggregate readings to identify changes in device energy consumption status in real-time~\cite{almutairi2018scalable}, or otherwise exploit the effect of turning devices on or off to detect changes on the AC waveform measured by smart meters~\cite{saraswat2019}. In this work, we focus on disaggregating monthly consumption data, and hence, we compare against the two most recent approaches dealing with monthly consumption data. 
First, we use the nonnegative matrix factorization approach~\cite{batra2017matrix}. In this method, a matrix is constructed for each appliance which includes the energy consumption of the appliance and the aggregate consumption for all homes. In our case, for each device the size of the matrix is $ 103 \times 96 $ where the rows represent homes, $ 48 $ columns contain the given device's energy consumption for the considered $ 48 $ months, and the remaining $ 48 $ columns represent the aggregated home consumption for all months. This method is called matrix factorization ({MF}) in this section. 

In addition, we use the tensor factorization method proposed in~\cite{batra2018transferring}. The authors used a coupled nonnegative tensor factorization method to analyze data from multiple cities. While the factor related to the appliances is shared between multiple cities, the factors representing months and homes are different from one city to another. Since we only consider data from a single city, Austin, TX, the approach reduces to nonnegative tensor factorization. Different from our proposed method, the tensor in this case is a $ 3 $-way tensor. We refer to this method as nonnegative tensor factorization ({NTF}).

\subsection{Evaluation Metrics}
We adopt the evaluation metric used in~\cite{batra2018transferring}. It measures the quality of distributing the aggregate consumption between the devices compared to the ground-truth breakdown. The metric is called the percentage of energy correctly assigned ({\em PEC}). The definition of PEC for a home, device, month, year ( $ < i_1, i_2, i_3, i_4> $ ) quadruple is given by
\begin{align*}
\text{PEC} ( i_1, i_2, i_3, i_4 ) = \frac{| \aggtensor{E}(i_1, i_2, i_3, i_4) - \hat{\tensor{E}}(i_1, i_2, i_3, i_4) |}{\aggtensor{E}(I_1 + 1, i_2, i_3, i_4)}
\end{align*}
where $ \aggtensor{E}( i_1, i_2, i_3, i_4 ) $ and $ \hat{\tensor{E}}( i_1, i_2, i_3, i_4 ) $ refer to the ground-truth and predicted consumption by device $ i_1 $ in home $ i_2 $ in month $ i_3 $ of year $ i_4 $, respectively. In addition, $ \aggtensor{E}( I_1 + 1, i_2, i_3, i_4 ) $ is the aggregate consumption of home $ i_2 $ in month $ i_3 $ of year $ i_4 $. The appliance root mean squared error in the PEC (ARPEC) for a specific appliance is given by the root mean squared PEC of all missing instances.
In addition, we use the classical normalized mean squared error (NMSE) to measure estimation error of the {held-out} values. Let $ \mathcal{M} $ denote the set of indices of missing values which we aim to estimate. Then, the NMSE is defined as
\begin{align*}
\text{NMSE} = \frac{\sum_{(i_1, \ldots, i_4)^{} \in \mathcal{M}} (  \aggtensor{E}(i_1, \ldots, i_4) - \hat{\tensor{E}}(i_1, \ldots, i_4)  )^2 }{\sum_{(i_1, \ldots, i_4) \in \mathcal{M}} (  \aggtensor{E}(i_1, \ldots, i_4)  )^2}.
\end{align*}

\subsection{Exact Aggregation}
In order to demonstrate the effectiveness of GRATE in the exact aggregation scenario, we construct a tensor where the measured aggregated consumption is replaced with an exact aggregation of the devices' power consumptions. 
To simulate a missing fiber scenario, we randomly sample second mode fibers to hide with different percentages. This means that the energy breakdown is missing for a percentage of all recorded months. 
Our GRATE approach with nonnegativity constraint on the factors is used to disaggregate the monthly consumption and estimate the energy breakdown in each month in the missing fibers. We compare our approach against the nonnegative tensor factorization method which is performed using the Tensorlab toolbox~\cite{tensorlab}. The rank of the tensor used is $ 18 $ which was selected through cross-validation. 

Fig.~\ref{fig:Exact_NMSE} shows the {normalized mean squared error (NMSE)} of estimating held-out values when the percentage of missing fibers is between $ 10\% $ and $ 70\% $. Simulation results show that the GRATE approach achieves better recovery than the NTF especially for high percentages of missing fibers. This shows that utilizing the structure of the compound tensor, as realized by the proposed approach, leads to better estimation performance.

\begin{figure}[t]
	\centering
	\includegraphics[width=0.75\columnwidth]{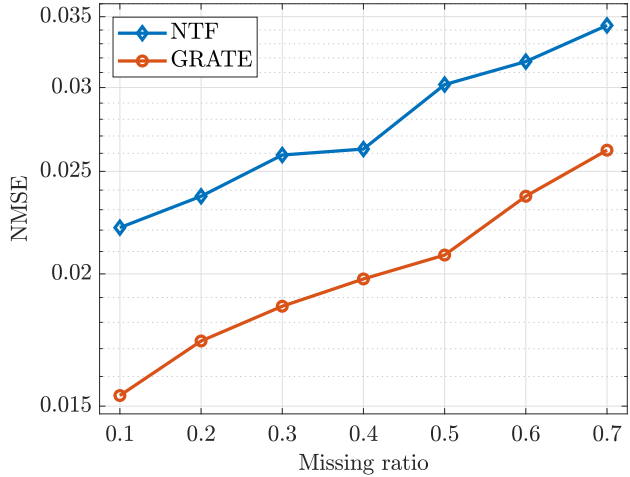}
	\caption{NMSE for different missing fiber percentages}
	\label{fig:Exact_NMSE}
\end{figure}
\subsection{Inexact Aggregation}
Two different scenarios are considered for assessing the performance of the proposed approach in the inexact aggregation case. In the first scenario, we consider the case of missing {detailed data} from homes. In other words, we remove the detailed data of specific homes 
and keep only their aggregated energy consumption. Treating the detailed consumption of their devices as missing, we fit the proposed model as well as the baselines in order to obtain estimates of the energy breakdown. In our simulation, we randomly selected $20$ homes out of total of $103$ homes, and we assume these homes only have aggregated consumption with no device level information. Note that out of all homes, only two homes have device level information available for the whole period of $4$ years.

In the second experiment, we sample first mode fibers from the tensor. For these randomly selected fibers, we treat the detailed consumption of the appliances as missing values. Then, after obtaining the low-rank factors knowing only the aggregated consumption from these fibers, we test the ability of our approach and the baselines to recover the true energy breakdown. In our experiments, for testing, we randomly choose $ 45\% $ of the months with available detailed energy consumption.

The NTF approach is implemented using the Tensorlab toolbox~\cite{tensorlab}. The constrained linear least-squares problems in the iterations of the proposed approach are solved using the Matlab constrained least-squares solver (\texttt{sqlin}). 
For the NTF approach, we select the tensor rank to be $ 18 $ which was validated to be the best rank for both simulation scenarios. For our approach, using a cross-validation technique, we set the rank of the $ 4 $-way tensor to be $ 23 $, which also has almost the same number of parameters as the NTF model with rank $ 18 $. The matrix factorization approach is implemented using Matlab NMF toolbox~\cite{li2013non}. The rank of each matrix factorization is chosen to be {$ 3 $ }in order to have similar number of parameters for all methods. 
Notice that if a rank $ F $ matrix factorization model is used in the MF method for all devices, the total number of parameters in this method is $ 6 F (103 + 96) $.

\subsection{Results and Discussions for Inexact Aggregation}
In the first experiment, we hide the breakdown of energy consumption of $20$ homes for the whole $ 4 $-year period. In Table~\ref{table:NMSE_slabs}, the NMSE of the estimated breakdown for these homes using our approach and the baselines is presented. The proposed GRATE method achieves favorable estimation accuracy for most of appliances. In the few instances where the GRATE does not achieve the best NMSE, it is still very close to the performance of the best baseline. More importantly, the NMSE for estimating the missing slabs of the tensor using the proposed approach is much lower than the baselines.
\begin{table}[h]
	\newcolumntype{C}[1]{>{\centering\arraybackslash}p{#1}}
	\caption[]{NMSE for the energy breakdown for $ 20 $ homes with only aggregated consumption.}
	\begin{center}
		\begin{tabular}{p{1.1cm}  C{1.4cm} C{1.4cm} C{1.4cm}}
			\toprule
			{\em \bf Device}  & {\bf MF} & {\bf NTF} & {\bf GRATE}  \\
			\midrule
			{\em HVAC} &  0.1835 & 0.1258 & {\bf 0.1220}  \\ 
			{\em FN} & 0.3184 & {\bf 0.2233} & {0.2414}\\
			{\em OV} & 0.6957 & {0.6193} & {\bf 0.4705} \\ 
			{\em WD} & 0.3770 &{ 0.2997} & {\bf 0.2463}  \\ 
			{\em MW} & {\bf 0.5919} & {0.6657} & { 0.6462} \\ 
			{\em RF} &  0.2117 & 0.2004 & {\bf 0.1871}  \\ 
			\midrule
			{\em Total} & 0.2937 & 0.3657 & {\bf 0.2271}  \\ 
			\bottomrule
		\end{tabular}
	\end{center}
	\label{table:NMSE_slabs}
\end{table}

In addition, we assess the performance of the proposed approach against the baselines using the RPEC measure. This shows the ratio of the energy that is correctly allocated to a specific appliance. Our approach has an advantage in allocating energy to the correct appliance. Superior performance is observed for the HVAC consumption where the proposed approach has a lower RPEC for both homes.

\begin{table}[h]
	\newcolumntype{C}[1]{>{\centering\arraybackslash}p{#1}}
	\caption[]{ARPEC for the energy breakdown for $ 20 $ homes with only aggregated consumption.}
	\begin{center}
		\begin{tabular}{p{1.1cm}  C{1.4cm} C{1.4cm} C{1.4cm}}
			\toprule
			{\em \bf Device} &  {\bf MF} & {\bf NTF} & {\bf GRATE}  \\
			\midrule
			{\em HVAC} &  0.1335 & 0.0415 & {\bf 0.0352} \\ 
			{\em FN} & 0.1442 & {0.3773} & {\bf 0.0880} \\
			{\em OV} & 0.2745 & {\bf 0.2062} & { 0.2267}  \\ 
			{\em WD} & 0.0649 &{0.0564} & {\bf 0.0475} \\ 
			{\em MW} & 0.0270 & {0.0119} & {\bf 0.0107} \\ 
			{\em RF} &  0.1208 & 0.0946 & {\bf 0.0541}  \\ 
			\bottomrule
		\end{tabular}
	\end{center}
	\label{table:RPEC_slabs}
\end{table}

In the second part of the simulation, we assess the performance of the proposed approach against the baselines in estimating the energy breakdown of $ 45\% $ missing fibers in the tensor. In Table~\ref{table:NMSE_fibers}, the NMSE of the recovered breakdown using all the approaches is presented for each appliance and for the whole missing fibers. The GRATE method achieves superior estimation accuracy especially for the HVAC which consumes most of the aggregated energy. Also, the estimation accuracy of the whole fibers using the GRATE approach is better than the two baselines.

\begin{table}[h]
	\newcolumntype{C}[1]{>{\centering\arraybackslash}p{#1}}
	\caption[]{NMSE for the breakdown of devices consumption in case of $ 45\% $ missing fibers.}
	\begin{center}
		\begin{tabular}{p{1.1cm}  C{1.4cm} C{1.4cm} C{1.4cm}} 
			\toprule
			{\em \bf Device} &  {\bf MF} & {\bf NTF} & {\bf GRATE} \\ 
			\midrule
			{\em HVAC} &  0.1301 & 0.0606 & {\bf 0.0413} \\ 
			{\em FN} & 0.2662 & 0.1551 & {\bf 0.1274}\\
			{\em OV} & 0.5688 & {\bf 0.4761} & 0.4952 \\ 
			{\em WD} &  0.5481 & 0.4088 & {\bf 0.4045} \\ 
			{\em MW} &  0.6748 & {\bf 0.5900} & 0.6049 \\ 
			{\em RF} &  0.2276 & 0.1857 & {\bf 0.1838} \\ 
			\midrule
			{\em Total} & 0.1515 & 0.0777 & {\bf 0.0583} \\ 
			\bottomrule
		\end{tabular}
	\end{center}
	\label{table:NMSE_fibers}
\end{table}

	\section{Conclusions}
\label{sec:conclusion}

In this paper, we presented GRATE, a novel approach for granular recovery of aggregated tensor data using disaggregated examples. The approach formulates the recovery problem as a constrained tensor completion problem where the aggregation structure is enforced through constraints on the latent domain factors. An alternating least-squares solver was employed to tackle the nonconvex optimization problem. The paper focused on the energy disaggregation problem where the consumption of all appliances is aggregated in the monthly bills. The goal is to provide accurate breakdown of energy consumption leveraging the knowledge of disaggregation examples from other homes with smart meters. GRATE was shown to achieve superior recovery performance in different scenarios compared to the state-of-the-art baselines.

	{
	\bibliographystyle{IEEEtran}
	\bibliography{Tensor}
}
\end{document}